\documentclass[11pt]{article}
\usepackage[utf8]{inputenc}
\usepackage{amssymb}
\usepackage{amsmath,setspace,geometry}
\usepackage{amsthm}
\usepackage{amsfonts}
\usepackage[shortlabels]{enumitem}
\usepackage{rotating}
\usepackage{pdflscape}
\usepackage{graphicx}
\usepackage{bbm}
\usepackage{comment}
\usepackage[dvipsnames]{xcolor}
\usepackage{hyperref}
\hypersetup{colorlinks=true, linkcolor= BrickRed, citecolor = BrickRed, filecolor = BrickRed, urlcolor = BrickRed, hypertexnames = true}
\usepackage[]{natbib} 
\bibpunct[:]{(}{)}{,}{a}{}{,}
\geometry{left = 1.0in,right = 1.0in,top = 1.0in,bottom = 1.0in}
\usepackage[english]{babel}
\usepackage{float}
\usepackage{caption}
\usepackage{subcaption}
\usepackage{booktabs}
\usepackage{pdfpages}
\usepackage{threeparttable}
\usepackage{lscape}
\usepackage{bm}
\usepackage{tikz}
\usetikzlibrary{positioning, shapes, arrows}
\usepackage{setspace}
\onehalfspacing

\newcommand\cites[1]{\citeauthor{#1}'s\ (\citeyear{#1})}
\newtheorem{theorem}{Theorem}
\newtheorem{claim}{Claim}
\newtheorem{proposition}{Proposition}

\usepackage{color}

\newcommand{\tb}{\textbf}

\newcommand{\1}{\textup{\mbox{1}\hspace{-0.25em}\mbox{l}}}

\begin{document}

\title{A Note on Assortativeness Measures\thanks{We thank Yusuke Ishihata, Chihiro Inoue, Shangwen Li, and Kadachi Ye for the helpful comments. This work was supported by JST ERATO Grant Number JPMJER2301, Japan. }}
\author{Kenzo Imamura\thanks{\href{mailto:}{imamurak@e.u-tokyo.ac.jp} Graduate School of Economics, The University of Tokyo.
 },
 Suguru Otani\thanks{\href{mailto:}{suguru.otani@e.u-tokyo.ac.jp} Graduate School of Economics, The University of Tokyo.},
 Tohya Sugano\thanks{\href{mailto:}{sugano-tohya1011@g.ecc.u-tokyo.ac.jp} Graduate School of Economics, The University of Tokyo.},
 Koji Yokote\thanks{\href{mailto:}{koji.yokote@gmail.com} Graduate School of Economics, The University of Tokyo.}
 }
 \date{First version: December 25, 2025 \\ This version: March 12, 2026} 
\maketitle

\begin{abstract}
\citet{chiappori2025changes} study several indices of assortativeness in matching, including the aggregate likelihood ratio and the odds ratio. We provide a counterexample showing that their axiomatization of the aggregate likelihood ratio is not valid as stated. We identify the exact class of indices characterized by the axioms in \citet{chiappori2025changes}. We then show that the axiomatization of the aggregate likelihood ratio can be recovered by adding new axioms. In addition, we point out errors in the axiomatizations of other measures in \citet{chiappori2025changes}. Finally, we offer a generalization of the odds ratio from two-type markets to multi-type markets. 
\end{abstract}

\section{Introduction}

The degree of assortative matching between men and women—how similar partners are in terms of education, age, income, and other socioeconomic characteristics—has long been viewed as a central determinant of inequality and welfare in modern economies \citep{chiappori2023mating}. If high-income or highly educated individuals increasingly marry one another, household income inequality may rise even when individual wage distributions remain stable. Understanding the extent and evolution of assortative matching is therefore crucial for interpreting long-run trends in inequality and social mobility. 

In a pioneering study, \citet{chiappori2025changes} propose an axiomatic foundation for indices of assortative matching and provide empirical evidence on changes in marital sorting in the United States. Their framework offers a valuable basis for interpreting trends in marital sorting in subsequent work. In this note, we show that the axiomatization of one measure, the {\it aggregate likelihood ratio}, is not correct as currently stated.\footnote{\cite{eika2019educational} define a type-specific likelihood ratio as the ratio of the observed probability that a man and a woman of the same type (e.g., age) marry to the corresponding probability under random matching. The aggregate likelihood ratio is the weighted average of these type-specific likelihood ratios, with weights given by the expected mass of like-type couples under random matching. The measure has been applied to sociology and demographic studies such as \cite{pesando2021educational}.} 
We identify the exact class of indices characterized by the axioms in \citet{chiappori2025changes} (Theorem \ref{thm-1}). 
We then propose a refinement that corrects the statement and restores the intended result, while preserving the spirit and practical usefulness of their original approach (Theorem \ref{thm-2}).

\citet{chiappori2025changes} axiomatize two measures other than the aggregate likelihood ratio: the {\it odds ratio} and the {\it normalized trace}. We show that the axiomatizations of these masures are not correct either. In the final section, we discuss a generalization of the odds ratio from two-type markets to multi-type markets. 

\section{Model} 
\label{sect2} 
Our notation follows that of \cite{chiappori2025changes}. 
We consider two-type markets, i.e., there are two types of men and two types of women. 
Let $\mathcal{M}=\mathbb{R}_{\geq 0}^4 \backslash\{\mathbf{0}\}$ denote the entire collection of possible matching patterns. We represent a matching pattern $M\in \mathcal{M}$ by a $2 \times 2$ matrix given by 
\begin{align*}
\Bigl(\begin{tabular}{c|c}
$a$ & $b$ \\
\hline$c$ & $d$
\end{tabular}\Bigr). 
\end{align*}
If the two types are ``high income'' and ``low income'', then $a$ represents the number of matches between high-income men and high-income women, $b$ represents the number of matches between high-income men and low-income women, $c$ represents the number of matches between low-income men and high-income women, and $d$ represents the number of matches between low-income men and low-income women.

We call $M\in \mathcal{M}$ a \textbf{matching matrix}, or simply a \textbf{matching}. We often represent a matching in the vector form as $M=(a,b,c,d)$.  
For each $M\in \mathcal{M}$, let $|M|:=a+b+c+d$ denote the population size. 
We write $M>\mathbf{0}$ to mean that every entry of $M$ is positive. Such $M$ is called a \textbf{positive matching matrix}. Let $\mathcal{M}_{>0}$ denote the set of all positive matching matrices. 

An \textbf{index} is a function $I:\mathcal{M}\rightarrow \mathbb{R}_{\geq 0}$. 
For each $M\in \mathcal{M}$, $I(M)$ represents how assortative the matching matrix is. 
Among several indices considered in \cite{chiappori2025changes}, we focus on the following one. 
\begin{itemize}
\item The \textbf{aggregate likelihood ratio} (abbreviated ALR) is defined by
\begin{align*}
I_L(M)=\frac{a+d}{|M|} /\left(\frac{a+b}{|M|} \frac{a+c}{|M|}+\frac{d+b}{|M|} \frac{d+c}{|M|}\right) \text{ for all } M\in \mathcal{M}. 
\end{align*}
Equivalently, 
\begin{align*}
I_L(M)=(a+d) /\left(\frac{(a+b)(a+c)}{|M|}+\frac{(d+b)(d+c)}{|M|}\right) \text{ for all } M\in \mathcal{M}. 
\end{align*}
\end{itemize} 

Notice that ALR is {\it not} well-defined for matching matrices $M=(a,b,c,d)$ such that $(a+b)(a+c)+(d+b)(d+c)=0$ (this equation holds if $a=b=d=0$ or $a=c=d=0$). 
Therefore, in the following analysis, we confine our attention to the class of matching matrices $\tilde{\mathcal{M}}\subseteq \mathcal{M}$ for which ALR is well-defined.
\begin{align*}
\tilde{\mathcal{M}}=\{(a,b,c,d)\in \mathcal{M} \mid (a+b)(a+c)+(d+b)(d+c)\neq 0\}.
\end{align*}

We revisit the axioms in \cite{chiappori2025changes}. 
Consider a domain $\mathcal{M}'\subseteq \mathcal{M}$ with $\mathcal{M}'\neq \emptyset$. 
We define axioms for an index $I:\mathcal{M}'\rightarrow \mathbb{R}_{\geq 0}$.\footnote{We note that \cites{chiappori2025changes} define axioms for an index $I:\mathcal{M}\rightarrow \mathbb{R}_{\geq 0}$.} 
The following three axioms are called ``invariance axioms''. 
\begin{itemize}
\item \textbf{Scale Invariance}. For every $M\in \mathcal{M}'$ and $\lambda>0$ with $\lambda M\in \mathcal{M}'$, we have $I(M)=I(\lambda \cdot M)$. 
\item \textbf{Side Invariance}. 
\begin{align*}
\text{For every } \Bigl(
\begin{tabular}{c|c}
$a$ & $b$ \\
\hline $c$ & $d$
\end{tabular}\Bigr), 
\Bigl(\begin{tabular}{c|c}
$a$ & $c$ \\
\hline $b$ & $d$
\end{tabular}\Bigr)\in \mathcal{M}', 
\text{ we have } 
I\Bigl(
\begin{tabular}{c|c}
$a$ & $b$ \\
\hline $c$ & $d$
\end{tabular}\Bigr) = 
I\Bigl(\begin{tabular}{c|c}
$a$ & $c$ \\
\hline $b$ & $d$
\end{tabular}\Bigr).
\end{align*}
\item \textbf{Type Invariance}. 
\begin{align*}
\text{For every } 
\Bigl(\begin{tabular}{c|c}
$a$ & $b$ \\
\hline$c$ & $d$
\end{tabular}\Bigr), 
\Bigl(\begin{tabular}{c|c}
$d$ & $c$ \\
\hline$b$ & $a$
\end{tabular}\Bigr)\in \mathcal{M}', 
\text{ we have } 
I\Bigl(\begin{tabular}{c|c}
$a$ & $b$ \\
\hline$c$ & $d$
\end{tabular}\Bigr) =
I\Bigl(\begin{tabular}{c|c}
$d$ & $c$ \\
\hline$b$ & $a$
\end{tabular}\Bigr).
\end{align*} 
\end{itemize} 

Monotonicity axioms state that if the underlying matching matrix changes in a certain way, then the assortativeness increases. 
\begin{itemize}
\item \textbf{Marginal Monotonicity}. For every pair of positive matching matrices $M= (a, b, c, d)\in \mathcal{M}'$ and $M^{\prime}=\left(a^{\prime}, b^{\prime}, c^{\prime}, d^{\prime}\right)\in \mathcal{M}'$ with the same marginal distributions (i.e., $a+c=a^{\prime}+c^{\prime}, a+b=a^{\prime}+b^{\prime}, d+b=d^{\prime}+b^{\prime}$, and $\left.d+c=d^{\prime}+c^{\prime}\right)$, $I(M) > I(M^{\prime})$ if and only if $a>a^{\prime}$ (equivalently, $b<b^{\prime}, c<c^{\prime}$, or $\left.d>d^{\prime}\right)$.
\end{itemize} 
To axiomatize ALR, \cite{chiappori2025changes} introduced the following axiom. 
\begin{itemize}
\item \textbf{Random Decomposability}. For every pair of positive matching matrices $M, M^{\prime}\in \mathcal{M}'$,
\begin{align*}
I\left(M+M^{\prime}\right)=\frac{r(M)}{r\left(M+M^{\prime}\right)} I(M)+\frac{r\left(M^{\prime}\right)}{r\left(M+M^{\prime}\right)} I\left(M^{\prime}\right), 
\end{align*}
where, for each $M\in \mathcal{M}'$, 
\begin{align}
r(M) & := \frac{a+b}{|M|} \frac{a+c}{|M|}|M|+\frac{d+b}{|M|} \frac{d+c}{|M|}|M| \tag*{} \\
& =\frac{(a+b)(a+c)+(d+b)(d+c)}{|M|}. \label{eq-1} 
\end{align}
\end{itemize} 

We reproduce Theorem 3 of  \cite{chiappori2025changes}. 

\begingroup
\renewcommand{\thetheorem}{CDMZ3} 
\begin{theorem}\label{thm:CDMZ3}
An index satisfies the three invariance axioms, marginal monotonicity, and random decomposability if and only if it is a positive multiple of aggregate likelihood ratio. 
\end{theorem}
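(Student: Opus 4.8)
The plan is to prove the two directions separately, with essentially all of the substance lying in the ``only if'' direction. For the ``if'' direction I would verify that every positive multiple $cI_L$ (with $c>0$) satisfies the five axioms. Scale invariance, side invariance, and type invariance are immediate from the symmetric form of the denominator of $I_L$, and the decisive observation is that $r(M)\,I_L(M)=a+d$. Since $a+d$ is additive across matchings and $r$ is homogeneous of degree one, random decomposability and scale invariance follow at once; marginal monotonicity holds because fixing the marginals fixes both $|M|$ and the denominator, leaving $I_L$ strictly increasing in $a+d$, equivalently in $a$.

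For the converse the natural device is to linearize the problem through $r$. I would set $J(M):=r(M)\,I(M)$ and note that random decomposability is precisely the statement $J(M+M')=J(M)+J(M')$ on positive matchings, i.e. $J$ solves Cauchy's equation on the positive cone. Scale invariance of $I$ together with the degree-one homogeneity of $r$ gives $J(\lambda M)=\lambda J(M)$ for all $\lambda>0$. Additivity and homogeneity, supplemented by the mild regularity that the monotonicity axiom supplies, should then force $J$ to be linear, $J(M)=\alpha a+\beta b+\gamma c+\delta d$.

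The invariance axioms then cut this four-parameter family down. Because $r$ is itself invariant under the side swap $b\leftrightarrow c$ and the type reversal $(a,b,c,d)\mapsto(d,c,b,a)$, side and type invariance transfer verbatim from $I$ to $J$; matching coefficients yields $\beta=\gamma$ and $\alpha=\delta$, so that $J(M)=\alpha(a+d)+\beta(b+c)$ and hence $I(M)=\bigl(\alpha(a+d)+\beta(b+c)\bigr)/r(M)$. Finally, applying marginal monotonicity along a fixed-marginal path, where increasing $a$ raises $a+d$ by two while lowering $b+c$ by two, shows that $I$ is increasing in $a$ exactly when $\alpha>\beta$.

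The main obstacle is the last step. To conclude that $I$ is a positive multiple of $I_L=(a+d)/r$, I would need $\beta=0$, but the axioms deliver only the strict inequality $\alpha>\beta$. I do not see how to exclude a nonzero cross-term coefficient; indeed any pair with $0\le\beta<\alpha$ appears to produce an index satisfying all five axioms yet not proportional to ALR. I therefore expect the proof to stall precisely here, which suggests that the stated characterization is incomplete and that the genuine admissible class is the two-parameter family $\bigl(\alpha(a+d)+\beta(b+c)\bigr)/r(M)$ rather than the multiples of ALR alone.
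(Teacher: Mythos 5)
Your diagnosis is exactly right, and in fact it is the main point of this paper: Theorem CDMZ3 is \emph{false} as stated, and the paper's contribution is the counterexample $I'_L(M)=\bigl(a+d+\tfrac{b}{2}+\tfrac{c}{2}\bigr)/r(M)$ (the case $\alpha=1$, $\beta=\tfrac12$ of your family) together with Theorem \ref{thm-1}, which proves that the axioms characterize precisely the two-parameter class $\bigl(\alpha(a+d)+\beta(b+c)\bigr)/r(M)$ with $\alpha>\beta\geq 0$ that you arrive at. Your route is essentially the paper's proof of Theorem \ref{thm-1}: linearize via $\tilde{I}(M)=r(M)I(M)$, observe that random decomposability makes $\tilde{I}$ additive and scale invariance makes it positively homogeneous, deduce linearity, and then use side/type invariance to force $\beta=\gamma$, $\alpha=\delta$ and marginal monotonicity to force $\alpha>\beta$ (with $\beta\geq 0$ coming from nonnegativity of $I$). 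Two small points where the paper is more careful than your sketch. First, you appeal to ``mild regularity supplied by monotonicity'' to pass from Cauchy's equation to linearity; the paper avoids any regularity argument by choosing four positive matrices $M_1,\dots,M_4$ forming a basis of $\mathbb{R}^4$, writing an arbitrary positive $M$ as $\sum s_iM_i$ with possibly negative $s_i$, and rearranging $M+\sum_i\max\{-s_i,0\}M_i=\sum_i\max\{s_i,0\}M_i$ so that additivity and positive homogeneity on the cone alone yield $\tilde{I}(M)=\sum_i s_i\tilde{I}(M_i)$. Second, marginal monotonicity and random decomposability are stated only for \emph{positive} matching matrices, so they place no restriction on the boundary $\tilde{\mathcal{M}}\setminus\mathcal{M}_{>0}$; this is why Theorem \ref{thm-1} has a separate clause (ii) there, a degree of freedom your analysis does not address. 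The paper then restores the intended characterization of ALR (i.e., forces $\beta=0$ and pins down the boundary values) in Theorem \ref{thm-2} by adding maximum heterogamy and continuity.
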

\endgroup

\addtocounter{theorem}{-1}


\section{Counterexample to Theorem CDMZ3} 

To see that this theorem does not hold for an index $I:\hat{\mathcal{M}}\rightarrow \mathbb{R}_{\geq 0}$, consider the following variant of $I_L$. 
\begin{align*}
I'_L(M)=\Bigl(a+d+\frac{b}{2}+\frac{c}{2}\Bigr) /\left(\frac{(a+b)(a+c)}{|M|}+\frac{(d+b)(d+c)}{|M|}\right) \text{ for all } M\in \tilde{\mathcal{M}}. 
\end{align*}

In the next section we prove that $I'_L$ satisfies all the axioms in Theorem CDMZ3; see the proof of the ``if'' direction of Theorem \ref{thm-1}. 
We show that $I'_L$ is not a positive multiple of $I_L$. 
Suppose, for contradiction, that $I'_L$ is a positive multiple of $I_L$. Then, for every $M, M'\in \tilde{\mathcal{M}}$, 
\begin{align}
I_L(M) > I_L(M') \Longleftrightarrow I'_L(M) > I'_L(M'). 
\label{eq-counter-1} 
\end{align} 
Consider the following two matrices. 
\begin{align*}
M=\Bigl(\begin{tabular}{c|c}
$1$ & $1$ \\
\hline$1$ & $1$
\end{tabular}\Bigr), \: \: 
M'=\Bigl(\begin{tabular}{c|c}
$1$ & $1$ \\
\hline$3$ & $2$
\end{tabular}\Bigr). 
\end{align*}
Then,
\begin{align*}
&I_L(M)=1>\frac{21}{23}=I_L\left(M^{\prime}\right), \\
&I_L^{\prime}(M)=\frac{3}{2}<\frac{35}{23}=I_L^{\prime}\left(M^{\prime}\right).
\end{align*}
We obtain a contradiction to (\ref{eq-counter-1}).

\section{Recovering axiomatization of the aggregate likelihood ratio} 
\label{sect-recover}
In this section, we recover the axiomatization of ALR in \cite{chiappori2025changes}. To this end, we first identify the exact class of indices that satisfy the axioms in Theorem CDMZ3. 
\begin{theorem}
\label{thm-1}
An index $I:\tilde{\mathcal{M}}\rightarrow \mathbb{R}_{\geq 0}$ satisfies the three invariance axioms, marginal monotonicity, and random decomposability if and only if the following two conditions hold: 
\begin{itemize}
\item[(i)] On the class of positive matching matrices $\mathcal{M}_{>0}$, there exist $\alpha, \beta\in \mathbb{R}$ with $\alpha>\beta \geq 0$ such that, 
\begin{align}
I\Bigl(\begin{tabular}{c|c}
$a$ & $b$ \\
\hline$c$ & $d$
\end{tabular}\Bigr)=\cfrac{\alpha a+\beta b +\beta c +\alpha d}{r(M)} \text{ for all } M=\Bigl(\begin{tabular}{c|c}
$a$ & $b$ \\
\hline$c$ & $d$
\end{tabular}\Bigr)\in \mathcal{M}_{>0}. 
\label{eq-main-1} 
\end{align}
\item[(ii)] On the class of matching matrices $\tilde{\mathcal{M}} \setminus \mathcal{M}_{>0}$, $I$ is an index that satisfies the three invariance axioms. 
\end{itemize}
\end{theorem}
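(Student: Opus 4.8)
The plan is to prove both directions, with the ``if'' part being essentially a verification and the ``only if'' part carrying the real content.

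For the ``if'' direction, I would first observe that none of the five axioms ever relates a matrix in $\mathcal{M}_{>0}$ to a matrix in $\tilde{\mathcal{M}}\setminus\mathcal{M}_{>0}$: scaling by $\lambda>0$ and the coordinate permutations underlying side and type invariance all preserve the zero-pattern of a matrix, while marginal monotonicity and random decomposability are quantified only over positive matrices. Hence conditions (i) and (ii) can be checked separately on the two classes. On $\mathcal{M}_{>0}$ I would verify the five axioms directly for the stated form. Writing $N(M)=\alpha a+\beta b+\beta c+\alpha d$, the numerator is additive in $M$ while $r$ is positively homogeneous of degree one, $r(\lambda M)=\lambda r(M)$, so scale invariance and random decomposability are immediate (the latter because $\frac{r(M)}{r(M+M')}\frac{N(M)}{r(M)}+\frac{r(M')}{r(M+M')}\frac{N(M')}{r(M')}=\frac{N(M+M')}{r(M+M')}$); side and type invariance hold because both $N$ and $r$ are symmetric under $b\leftrightarrow c$ and under $(a,b,c,d)\mapsto(d,c,b,a)$; and marginal monotonicity follows because, among matrices with fixed marginals, $r$ is constant and $N=2(\alpha-\beta)a+\text{const}$, strictly increasing in $a$ precisely when $\alpha>\beta$.

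For the ``only if'' direction, the key device is to linearize. I would set $J(M):=r(M)\,I(M)$ on $\mathcal{M}_{>0}$, where $r(M)>0$. Random decomposability rearranges exactly into the additivity $J(M+M')=J(M)+J(M')$, and scale invariance together with $r(\lambda M)=\lambda r(M)$ gives the homogeneity $J(\lambda M)=\lambda J(M)$ for all $\lambda>0$. The main obstacle is then to conclude that $J$ is the restriction of a linear functional on $\mathbb{R}^4$: additivity on the open cone $\mathcal{M}_{>0}=\mathbb{R}^4_{>0}$ alone admits pathological (non-measurable) solutions, so the homogeneity must be used to rule these out. I would extend $J$ to all of $\mathbb{R}^4$ by writing any $z\in\mathbb{R}^4$ as $z=x-y$ with $x,y\in\mathbb{R}^4_{>0}$ (e.g.\ $x=z+t\mathbf{1}$, $y=t\mathbf{1}$ for large $t$) and setting $\tilde{J}(z):=J(x)-J(y)$; additivity makes this well defined and additive on $\mathbb{R}^4$, and homogeneity upgrades it to $\tilde{J}(\lambda z)=\lambda\tilde{J}(z)$ for every real $\lambda$. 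An additive, real-homogeneous map is linear, so $\tilde{J}(a,b,c,d)=\alpha a+\beta_1 b+\gamma_1 c+\delta d$ for constants, and $J$ is its restriction.

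It then remains to pin down the coefficients. Since $r$ is invariant under $b\leftrightarrow c$ and under $(a,b,c,d)\mapsto(d,c,b,a)$, side invariance forces $\beta_1=\gamma_1=:\beta$ and type invariance forces $\delta=\alpha$, giving $J(M)=\alpha(a+d)+\beta(b+c)$ and hence $I(M)=\frac{\alpha a+\beta b+\beta c+\alpha d}{r(M)}$ on $\mathcal{M}_{>0}$. Applying marginal monotonicity to matrices with common marginals yields $\alpha>\beta$ as in the ``if'' computation, and evaluating $I\geq 0$ along the matrices $(a,b,c,d)=(\varepsilon,1,1,\varepsilon)$ as $\varepsilon\to 0^+$ yields $\beta\geq 0$ (whence $\alpha>0$). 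This establishes (i). Finally, (ii) follows from the same decoupling observation used in the ``if'' direction: on $\tilde{\mathcal{M}}\setminus\mathcal{M}_{>0}$ the monotonicity and decomposability axioms impose nothing, and the three invariance axioms never cross between the two classes, so the only constraint inherited by $I$ there is that it satisfy the three invariance axioms.
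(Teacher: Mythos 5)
Your proposal is correct and follows essentially the same route as the paper: both directions hinge on passing to $\tilde{I}(M)=r(M)I(M)$ (your $J$), deriving additivity from random decomposability and positive homogeneity from scale invariance, concluding linearity, and then using side/type invariance, marginal monotonicity, and nonnegativity to pin down $\alpha>\beta\geq 0$. The only difference is cosmetic bookkeeping in the linearity step --- you extend $J$ to all of $\mathbb{R}^4$ by differences and evaluate on the standard basis, whereas the paper expands $M$ in a basis of four positive matrices and splits the coefficients into positive and negative parts --- and your explicit remark that the invariance axioms never relate $\mathcal{M}_{>0}$ to $\tilde{\mathcal{M}}\setminus\mathcal{M}_{>0}$ is a welcome clarification of a point the paper leaves implicit.
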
 
\begin{proof}
\textbf{Proof of the ``if'' direction:} 
It is clear that $I$ satisfies all the axioms on $\tilde{\mathcal{M}} \setminus \mathcal{M}_{>0}$, because on this domain, marginal monotonicity and random decomposability have no bite (recall that these axioms are defined only for two {\it positive} matching matrices in $\mathcal{M}_{>0}$).  

We show that $I$ satisfies all the axioms on the domain $\mathcal{M}_{>0}$. 
Suppose that there exist $\alpha, \beta\in \mathbb{R}$ with $\alpha>\beta\geq 0$ such that $I$ is given as in (\ref{eq-main-1}). 
This index satisfies side invariance because it is symmetric with respect to $b$ and $c$. 
Similarly, it satisfies type invariance because it is symmetric with respect to $a$ and $d$. 
Furthermore, it satisfies scale invariance because, after multiplying a matching matrix $M\in \tilde{\mathcal{M}}$ by $\lambda>0$, both the numerator and the denominator of $I(M)$ are multiplied by $\lambda$. In the following we show that $I$ satisfies marginal monotonicity and random decomposability. 
\medskip 

\noindent
{\it Proof that $I$ satisfies marginal monotonicity:} 
Take $\epsilon\in \mathbb{R}$
and consider two positive matching matrices $M, M_{\epsilon} \in \mathcal{M}_{>0}$ with the same marginal distributions, i.e., 
\begin{align*}
M=\Bigl(\begin{tabular}{c|c}
$a$ & $b$ \\
\hline$c$ & $d$
\end{tabular}\Bigr), \: \: 
M_{\epsilon}=\Bigl(\begin{tabular}{c|c}
$a+\epsilon$ & $b-\epsilon$ \\
\hline$c-\epsilon$ & $d+\epsilon$
\end{tabular}\Bigr). 
\end{align*}
Since 
\begin{align*}
\left|M_{\varepsilon}\right|=(a+\varepsilon)+(b-\varepsilon)+(c-\varepsilon)+(d+\varepsilon)=|M|, \end{align*} 
the population size is the same between the two matching matrices. The following equations also hold: 
\begin{align*}
\begin{array}{ll}
(a+\varepsilon)+(b-\varepsilon)=a+b, & (a+\varepsilon)+(c-\varepsilon)=a+c, \\
(d+\varepsilon)+(b-\varepsilon)=d+b, & (d+\varepsilon)+(c-\varepsilon)=d+c .
\end{array}
\end{align*}
Therefore, $r(M)=r(M_{\epsilon})$. 
Moreover, 
\begin{align*}
\text{Numerator of $I(M_{\epsilon})$}&=\alpha \cdot (a+\epsilon)+\alpha \cdot (d+\epsilon)+\beta\cdot (b-\epsilon)+\beta\cdot (c-\epsilon) \\
&=\alpha \cdot a + \alpha \cdot d + \beta \cdot b +\beta \cdot c+2 \epsilon\cdot (\alpha-\beta) \\
\text{Numerator of $I(M)$}&=\alpha \cdot a + \alpha \cdot d + \beta \cdot b +\beta \cdot c. 
\end{align*} 
Under $\alpha>\beta$, we have
\begin{align*}
I(M_{\epsilon})>I(M) \: \Longleftrightarrow \: \epsilon>0, 
\end{align*} 
as desired. 
\medskip 

\noindent
{\it Proof that $I$ satisfies random decomposability:} 
For each $M=(a,b,c,d)\in \mathcal{M}_{>0}$, we define 
\begin{align*}
A(M):=\alpha \cdot a+\alpha \cdot d+\beta \cdot b+\beta \cdot c.
\end{align*}
Then
\begin{align*}
I(M)=\frac{A(M)}{r(M)}.
\end{align*}
Consider $M, M'\in \mathcal{M}_{>0}$ and let $M'':=M+M^{\prime}$. Since $A(\cdot)$ is linear in $(a, b, c, d)$,
\begin{align*}
A(M'')=A(M)+A\left(M^{\prime}\right).
\end{align*}
Then,
\begin{align*}
\frac{r(M)}{r(M'')} I(M)+\frac{r\left(M^{\prime}\right)}{r(M'')} I\left(M^{\prime}\right)=\frac{r(M)}{r(M'')} \frac{A(M)}{r(M)}+\frac{r\left(M^{\prime}\right)}{r(M'')} \frac{A\left(M^{\prime}\right)}{r\left(M^{\prime}\right)}=\frac{A(M)+A\left(M^{\prime}\right)}{r(M'')}=\frac{A(M'')}{r(M'')}=I(M''),  
\end{align*}
which establishes random decomposability. 

\noindent
\textbf{Proof of the ``only if'' direction:} 
Let $I$ be an index that satisfies the three invariance axioms, marginal monotonicity, and random decomposability. It is clear that, on the domain of matching matrices $\tilde{\mathcal{M}} \setminus \mathcal{M}_{>0}$, $I$ is an index that satisfies the three invariance axioms.

It suffices to prove that, on the domain $\mathcal{M}_{>0}$, $I$ is given as stated. 
We define $\tilde{I}: \mathcal{M}_{>0} \rightarrow \mathbb{R}_{\geq 0}$ by
\begin{align}
\tilde{I}(M)=r(M)\cdot I(M) \text{ for all } M\in \mathcal{M}_{>0}. 
\label{eq-main-3} 
\end{align}
For each $M\in \mathcal{M}_{>0}$, since $r(M)\geq 0$ and $I(M)\geq 0$, we have $\tilde{I}(M)\geq 0$. 

For every pair of positive matching matrices $M, M'\in \mathcal{M}_{>0}$, it holds that 
\begin{align*}
\tilde{I}(M+M')&=r(M+M')\cdot I(M+M') \\
&=r(M)\cdot I(M)+r(M')\cdot I(M') \\
&=\tilde{I}(M)+\tilde{I}(M'), 
\end{align*}
where the second equality follows from random decomposability. Therefore, $\tilde{I}$ is additive. 

For $M \in \mathcal{M}_{>0}$ and $\lambda>0$, 
\begin{align*}
\tilde{I}(\lambda M)&=r(\lambda M)\cdot I(\lambda M) \\
&=\lambda \cdot r(M) \cdot I(\lambda M) \\
&=\lambda \cdot r(M)\cdot I(M) \\
&=\lambda \cdot \tilde{I}(M), 
\end{align*}
where the second equality follows from the definition of $r(M)$ (recall (\ref{eq-1})) and the third equality follows from scale invariance. Therefore, $\tilde{I}$ satisfies positive homogeneity of degree 1. 

Let  
\begin{align*}
M_1=\Bigl(\begin{tabular}{c|c}
$2$ & $1$ \\
\hline$1$ & $1$
\end{tabular}\Bigr), \: \: 
M_2=\Bigl(\begin{tabular}{c|c}
$1$ & $2$ \\
\hline$1$ & $1$
\end{tabular}\Bigr), \: \: 
M_3=\Bigl(\begin{tabular}{c|c}
$1$ & $1$ \\
\hline$2$ & $1$
\end{tabular}\Bigr), \: \:
M_4=\Bigl(\begin{tabular}{c|c}
$1$ & $1$ \\
\hline$1$ & $2$
\end{tabular}\Bigr). 
\end{align*}
Also let 
\begin{align}
\alpha':=\tilde{I}(M_1), \: \: 
\beta':=\tilde{I}(M_2), \: \: 
\gamma':=\tilde{I}(M_3), \: \: 
\delta':=\tilde{I}(M_4). 
\label{eq-main-5}
\end{align} 
By side invariance and type invariance, 
\begin{align}
\beta'=\gamma', \: \: \alpha'=\delta'. 
\label{eq-main-6}
\end{align}

Consider $M\in \mathcal{M}_{>0}$. 
Since $M_1, M_2, M_3, M_4$ form a basis of $\mathbb{R}^4$, there exist coefficients $s_1, s_2, s_3, s_4$ such that
\begin{align*}
M=s_1M_1+s_2M_2+s_3M_3+s_4M_4. 
\end{align*}
The coefficients are explicitly given by 

\begin{align}
&s_1=a-\frac{a+b+c+d}{5}, \tag*{} \\
&s_2=b-\frac{a+b+c+d}{5}, \tag*{} \\
&s_3=c-\frac{a+b+c+d}{5}, \tag*{} \\
&s_4=d-\frac{a+b+c+d}{5}. \label{eq-main-4} 
\end{align} 

Define
\begin{align*}
t_i^{\prime}:=\max \left\{s_i, 0\right\}, \quad t_i:=\max \left\{-s_i, 0\right\} \quad(i=1,2,3,4)
\end{align*}
For each $i=1, \dots, 4$, we have $t_i, t_i^{\prime} \geq 0$ and $t_i^{\prime}-t_i=s_i$. By the latter equation, 
\begin{align}
M=\sum_{i=1}^4\left(t_i^{\prime}-t_i\right) M_i \quad \Longleftrightarrow \quad M+\sum_{i=1}^4 t_i M_i=\sum_{i=1}^4 t_i^{\prime} M_i. 
\label{eq-main-2} 
\end{align}
Since $\tilde{I}$ satisfies additivity and positive homogeneity of degree 1,\footnote{We note that the equation below holds including the case that $t_i=0$ for some $i=1, \dots, 4$.}
\begin{align*}
\tilde{I}\Bigl(M+\sum_{i=1}^4 t_i M_i\Bigr)=\tilde{I}(M)+\sum_{i=1}^4 t_i \tilde{I}(M_i).
\end{align*}
Similarly, 
\begin{align*}
\tilde{I}\Bigl(\sum_{i=1}^4 t_i^{\prime} M_i\Bigr)=\sum_{i=1}^4 t'_i \tilde{I}(M_i). 
\end{align*} 
The above two displayed equations together with (\ref{eq-main-2}) imply 
\begin{align*}
\tilde{I}(M)=\sum_{i=1}^4 t'_i \tilde{I}(M_i)-\sum_{i=1}^4 t_i \tilde{I}(M_i)=\sum_{i=1}^4 s_i \tilde{I}(M_i). 
\end{align*}
By substituting (\ref{eq-main-5}), (\ref{eq-main-6}), and (\ref{eq-main-4}) into the above equation,  
\begin{align*}
\tilde{I}(M)=\frac{3 \alpha^{\prime}-2 \beta^{\prime}}{5} a+\frac{-2 \alpha^{\prime}+3 \beta^{\prime}}{5} b+\frac{-2 \alpha^{\prime}+3 \beta^{\prime}}{5} c+\frac{3 \alpha^{\prime}-2 \beta^{\prime}}{5} d. 
\end{align*}
By setting 
\begin{align*}
\alpha:=\frac{3 \alpha^{\prime}-2 \beta^{\prime}}{5}, \; \: \beta:=\frac{-2 \alpha^{\prime}+3 \beta^{\prime}}{5},  
\end{align*}
we obtain 
\begin{align*}
\tilde{I}(M)=\alpha a+\beta b+\beta c+\alpha d. 
\end{align*}
Since $\tilde{I}(M)=r(M)\cdot I(M)$, $I$ is given in the form of (\ref{eq-main-1}). 

As noted after (\ref{eq-main-3}), $\tilde{I}(M)\geq 0$ for all $M\in \tilde{\mathcal{M}}$, which implies $\alpha\geq 0$ and $\beta\geq 0$ hold. Indeed, if $\alpha<0$, by letting $a, d\rightarrow +\infty$ and $b,c\rightarrow 0$, we obtain a matching matrix $M=(a,b,c,d)\in \mathcal{M}_{>0}$ for which $\tilde{I}(M)<0$, a contradiction to $\tilde{I}(M)\geq 0$. 
Similarly, $\beta<0$ yields a contradiction. 
It remains to prove that $\alpha>\beta$. 
Take $\epsilon\in \mathbb{R}$ with $\epsilon>0$. Consider two positive matching matrices $M, M_{\epsilon} \in \mathcal{M}_{>0}$ such that
\begin{align*}
M=\Bigl(\begin{tabular}{c|c}
$a$ & $b$ \\
\hline$c$ & $d$
\end{tabular}\Bigr), \: \: 
M_{\epsilon}=\Bigl(\begin{tabular}{c|c}
$a+\epsilon$ & $b-\epsilon$ \\
\hline$c-\epsilon$ & $d+\epsilon$
\end{tabular}\Bigr). 
\end{align*}
As shown in the proof of the ``if'' part, we have $r(M)=r(M_{\epsilon})$. By marginal monotonicity of $I$, 
\begin{align*}
\alpha \cdot (a+\epsilon)+\alpha \cdot (d+\epsilon)+\beta\cdot (b-\epsilon)+\beta\cdot (c-\epsilon)>\alpha \cdot a + \alpha \cdot d + \beta \cdot b +\beta \cdot c. 
\end{align*}
This inequality implies $2 \epsilon\cdot (\alpha-\beta)>0$. Since $\epsilon>0$, we obtain $\alpha>\beta$, as desired. 
\end{proof}

Theorem \ref{thm-1} suggests that, to pin down the class of positive multiples of ALR, we need additional axioms that force $\beta=0$ in (\ref{eq-main-1}). 
We consider an axiom that is interpreted as the opposite of the maximum homogamy axiom in \cite{chiappori2025changes} (defined in Section \ref{sect-conclusion}). As in Section \ref{sect2}, we define axioms for an index $I:\mathcal{M}'\rightarrow \mathbb{R}_{\geq 0}$, where $\mathcal{M}'\subseteq \mathcal{M}$, $\mathcal{M}'\neq \emptyset$. 
\begin{itemize}
\item \textbf{Maximum Heterogamy.} For each $M\in \mathcal{M}_{>0}$ and $(a,b,c,d)\in \mathcal{M}'$ such that $b>0, c>0, a=d=0$, we have $I(M)\geq I(a,b,c,d)$. 
\end{itemize}
We also introduce the continuity axiom, which states that if the underlying matching matrix changes slightly, then the index changes slightly as well. 
\begin{itemize}
\item Continuity: $I:\mathcal{M}'\rightarrow \mathbb{R}_{\geq 0}$ is a continuous function. 
\end{itemize}
\begin{theorem}
\label{thm-2}
An index $I:\tilde{\mathcal{M}}\rightarrow \mathbb{R}_{\geq 0}$ satisfies the three invariance axioms, marginal monotonicity, random decomposability, maximum heterogamy, and continuity if and only if there exists $\alpha \in \mathbb{R}$ with $\alpha>0$ such that  
\begin{align}
I\Bigl(\begin{tabular}{c|c}
$a$ & $b$ \\
\hline$c$ & $d$
\end{tabular}\Bigr)=\cfrac{\alpha a+\alpha d}{r(M)} \text{ for all } M=\Bigl(\begin{tabular}{c|c}
$a$ & $b$ \\
\hline$c$ & $d$
\end{tabular}\Bigr)\in \tilde{\mathcal{M}}. 
\label{eq-main-8} 
\end{align}
\end{theorem}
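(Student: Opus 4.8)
The plan is to lean on Theorem~\ref{thm-1}, which already reduces the problem on $\mathcal{M}_{>0}$ to the two‑parameter family $\tfrac{\alpha a+\beta b+\beta c+\alpha d}{r(M)}$ with $\alpha>\beta\geq 0$. The two new axioms then have exactly two jobs to do: maximum heterogamy (through continuity) must force $\beta=0$, and continuity must propagate the resulting formula from $\mathcal{M}_{>0}$ to the whole domain $\tilde{\mathcal{M}}$.

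For the ``if'' direction I would take $I(M)=\tfrac{\alpha(a+d)}{r(M)}$ and verify the six axioms directly. Scale invariance and the three invariance axioms follow from the symmetry of both $\alpha(a+d)$ and $r(M)$ under the relevant coordinate swaps, exactly as in the proof of Theorem~\ref{thm-1} specialized to $\beta=0$; marginal monotonicity and random decomposability are then the $\beta=0$ instance of the ``if'' direction of Theorem~\ref{thm-1} on $\mathcal{M}_{>0}$ (where they are the only axioms with bite). Continuity is immediate because $r(M)\neq 0$ throughout $\tilde{\mathcal{M}}$, and maximum heterogamy holds trivially: every heterogamous matrix $(0,b,c,0)$ has numerator $\alpha(0+0)=0$, so $I(0,b,c,0)=0\leq I(M)$ for all $M$.

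For the ``only if'' direction I first invoke Theorem~\ref{thm-1} to write $I=\tfrac{\alpha a+\beta b+\beta c+\alpha d}{r(M)}$ on $\mathcal{M}_{>0}$ with $\alpha>\beta\geq 0$. The crucial step is to evaluate $I$ on the heterogamous matrices, which lie in $\tilde{\mathcal{M}}\setminus\mathcal{M}_{>0}$ (note $(0,b,c,0)\in\tilde{\mathcal{M}}$ since $(a+b)(a+c)+(d+b)(d+c)=2bc>0$), where Theorem~\ref{thm-1} leaves the index unspecified beyond invariance. Here continuity enters: approaching $(0,b,c,0)$ along the positive path $M_\epsilon=(\epsilon,b,c,\epsilon)$ and letting $\epsilon\to 0$ yields $I(0,b,c,0)=\tfrac{\beta(b+c)^2}{2bc}$. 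If $\beta>0$ this quantity is unbounded as the ratio $b/c$ degenerates, whereas maximum heterogamy demands $I(M^*)\geq I(0,b,c,0)$ for the fixed finite value $I(M^*)=2\beta$ at $M^*=(0,1,1,0)$; this contradiction forces $\beta=0$.

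Finally, with $\beta=0$ established on $\mathcal{M}_{>0}$, I would extend the identity $I(M)=\tfrac{\alpha(a+d)}{r(M)}$ to every $M\in\tilde{\mathcal{M}}\setminus\mathcal{M}_{>0}$ by approximating $M$ with the positive matrices $M+\epsilon\mathbf{1}\in\mathcal{M}_{>0}$ (adding $\epsilon$ to each entry) and invoking continuity of $I$ together with continuity of $r(\cdot)$ at points where $r(M)\neq 0$. Since $\alpha>\beta=0$ gives $\alpha>0$, this completes the characterization. I expect the main obstacle to be the $\beta=0$ argument: one must correctly compute the boundary limit $\tfrac{\beta(b+c)^2}{2bc}$ and recognize that it is its unboundedness in the ratio $b/c$—not merely its positivity—that collides with maximum heterogamy.
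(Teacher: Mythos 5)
Your proposal is correct and follows essentially the same route as the paper: invoke Theorem~\ref{thm-1} on $\mathcal{M}_{>0}$, use maximum heterogamy to force $\beta=0$, and use continuity to extend the formula to $\tilde{\mathcal{M}}\setminus\mathcal{M}_{>0}$. If anything, your treatment of the $\beta=0$ step is slightly more careful than the paper's: you correctly use continuity to evaluate $I(0,b,c,0)=\beta(b+c)^2/(2bc)$ on the boundary and observe that it is the degeneration of the ratio $b/c$ (not simply sending $b,c\to+\infty$, under which the value stays at $2\beta$ when $b=c$) that produces the unboundedness contradicting maximum heterogamy.
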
 
\begin{proof}
\textbf{Proof of the ``if'' direction:} 
Suppose that there exists $\alpha>0$ such that $I$ is given as in (\ref{eq-main-8}). 
By following the same argument as in the ``if'' direction of Theorem \ref{thm-1}, we can show that $I$ satisfies the three invariance axioms, marginal monotonicity, and random decomposability. Since $I$ is a linear function, it satisfies continuity. 
For each $(a,b,c,d)\in \tilde{\mathcal{M}}$ with $b>0, c>0, a=d=0$, we have $I(a,b,c,d)=0$. Therefore, maximum heterogamy holds. 
\\
\textbf{Proof of the ``only if'' direction:} 
By Theorem \ref{thm-1}, there exist $\alpha, \beta\in \mathbb{R}$ with $\alpha> \beta\geq 0$ such that $I$ is given as in (\ref{eq-main-1}) on the domain $\mathcal{M}_{>0}$. 
Note that for any $M\in \tilde{\mathcal{M}}\setminus \mathcal{M}_{>0}$, there exists a sequence $\{M^k\}_{k=1}^\infty \subseteq \mathcal{M}_{>0}$ such that $M^k\rightarrow M$. Therefore, by continuity, 
the value of $I(M)$ for $M\in \tilde{\mathcal{M}}\setminus \mathcal{M}_{>0}$ is also given by  (\ref{eq-main-8}). 

If $I$ satisfies maximum heterogamy, then $\beta=0$. Indeed, if $\beta>0$, by letting $b,c \rightarrow +\infty$ while keeping $a=d=0$, the value of $I(a,b,c,d)$ grows arbitrarily large, contradicting maximum heterogamy. 
\end{proof}

\section{Remarks on axiomatizations of other measures}
\label{sect-conclusion}
We have shown that Theorem 3 of \cite{chiappori2025changes} does not hold as currently stated and we proposed a modification. 
We note that Theorems 1 and 2 of \cite{chiappori2025changes} do not hold as currently stated either. 

\subsection{Odds ratio}
\label{sect:odds-counter}
For $(a,b,c,d)\in \mathcal{M}$, we define the \textbf{odds ratio} by 
\begin{align*}
I_O(M)=\left\{\begin{array}{l}
\frac{a d}{b c} \text { if } b \neq 0 \text { and } c \neq 0, \\
+\infty \text { if } b c=0 .
\end{array}\right.
\end{align*} 
A {\it preorder} $\succeq$ is a reflexive and transitive binary relation on $\mathcal{M}$. We revisit \cites{chiappori2025changes} axioms on a preorder $\succeq$. 
\begin{itemize}
\item \textbf{Marginal Independence}. For any $M\in \mathcal{M}$ and $\lambda>0$,  \\
\begin{tabular}{l|l}
$a$ & $b$ \\
\hline$c$ & $d$
\end{tabular} \; $\sim$ \; \begin{tabular}{l|l}
$\lambda a$ & $\lambda b$ \\
\hline$c$ & $d$
\end{tabular} \; $\sim$ \; \begin{tabular}{l|l}
$a$ & $b$ \\
\hline$\lambda c$ & $\lambda d$
\end{tabular} \; $\sim$ \; \begin{tabular}{l|l}
$\lambda a$ & $b$ \\
\hline$\lambda c$ & $d$
\end{tabular}\; $\sim$ \; \begin{tabular}{l|l}
$a$ & $\lambda b$ \\
\hline$c$ & $\lambda d$
\end{tabular}.
\item \textbf{Maximum Homogamy}. For any $M\in \mathcal{M}, a d>0$, and $b c=0,(a, b, c, d) \succeq M$.
\item \textbf{Marginal Monotonicity (for preorders)}. For every pair of positive matching matrices $M= (a, b, c, d)\in \mathcal{M}'$ and $M^{\prime}=\left(a^{\prime}, b^{\prime}, c^{\prime}, d^{\prime}\right)$ with the same marginal distributions (i.e., $a+c=a^{\prime}+c^{\prime}, a+b=a^{\prime}+b^{\prime}, d+b=d^{\prime}+b^{\prime}$, and $\left.d+c=d^{\prime}+c^{\prime}\right)$, $M \succ M^{\prime}$ if and only if $a>a^{\prime}$ (equivalently, $b<b^{\prime}, c<c^{\prime}$, or $\left.d>d^{\prime}\right)$.
\end{itemize} 

The following is Theorem 1 of \cite{chiappori2025changes}. 
\begingroup
\renewcommand{\thetheorem}{CDMZ1} 
\begin{theorem}\label{thm:CDMZ1}
The odds ratio induces the unique preorder that satisfies marginal monotonicity, marginal independence, and maximum homogamy.
\end{theorem}
\endgroup

To introduce a counterexample to this theorem, we define three subsets of $\mathcal{M}$. 
\begin{itemize}
\item $\mathcal{M}_{\infty}=\{(a,b,c,d)\mid bc=0, ad>0\}$.
\item $\widetilde{\mathcal{M}}_{\infty}=\{(a,b,c,d)\mid bc=0, ad=0\}$
\item $\mathcal{M}_{0}=\{(a,b,c,d)\mid bc>0 \text{ and } ad=0\}$.
\end{itemize}
Then, it holds that $\mathcal{M}=\mathcal{M}_{\infty}\cup \widetilde{\mathcal{M}}_{\infty}\cup \mathcal{M}_0\cup \mathcal{M}_{>0}$. 
Note that $\mathcal{M}_{\infty}$ and $\widetilde{\mathcal{M}}_{\infty}$ denote the set of matrices to which the odds ratio assigns $+\infty$, $\mathcal{M}_0$ denotes the set of matrices to which the odds ratio assigns $0$, and $\mathcal{M}_{>0}$ denotes the set of matrices to which the odds ratio assigns positive real number. 

We define a preorder $\succeq^*$ as follows: 
\begin{itemize}
\item[(1)] for each $M\in \mathcal{M}_{\infty}$ and $M'\in \widetilde{\mathcal{M}}_{\infty}\cup \mathcal{M}_{0}\cup \mathcal{M}_{>0}$, we have $M\succ^* M'$.  \\
for each $M, M'\in \mathcal{M}_{\infty}$, we have $M\sim^* M'$. 
\item[(2)] for each $M\in \widetilde{\mathcal{M}}_{\infty}$ and $M'\in \mathcal{M}_{0}\cup \mathcal{M}_{>0}$, we have $M\succ^* M'$. \\
for each $M, M'\in \widetilde{\mathcal{M}}_{\infty}$, we have $M\sim^* M'$.
\item[(3)] for each $M\in \mathcal{M}_0$ and $M'\in \mathcal{M}_{>0}$, we have $M\succ^* M'$. \\
for each $M, M'\in \mathcal{M}_0$, we have $M\sim^* M'$. 
\item[(4)] for each $M, M'\in \mathcal{M}_{>0}$, we have $M \succeq^* M'$ if and only if $I_O(M)\geq I_O(M')$. 
\end{itemize}  
Then, $\succeq^*$ satisfies all the three axioms of Theorem \ref{thm:CDMZ1}. First, $\succeq^*$ satisfies marginal monotonicity because this axiom bears only on matrices in $\mathcal{M}_{>0}$, and in this space, $\succeq^*$ is induced from the odds ratio. 
Moreover, $\succeq^*$ satisfies maximum homogamy by Condition (1). 
We show that $\succeq^*$ satisfies marginal independence. 
\begin{itemize}
\item For $M\in \mathcal{M}_{\infty}$ and $\lambda>0$, we have $\lambda M\in \mathcal{M}_{\infty}$. Therefore, by Condition (1), we have $M\sim^* \lambda M$. 
\item For $M\in \widetilde{\mathcal{M}}_{\infty}$ and $\lambda>0$, we have $\lambda M\in \widetilde{\mathcal{M}}_{\infty}$. Therefore, by Condition (2), we have $M\sim^* \lambda M$. 
\item For $M\in \mathcal{M}_{0}$ and $\lambda>0$, we have $\lambda M\in \mathcal{M}_{0}$. Therefore, by Condition (3), we have $M\sim^* \lambda M$. 
\item For $M\in \mathcal{M}_{>0}$ and $\lambda>0$, we have $\lambda M\in \mathcal{M}_{>0}$. Since $\succ^*$ is induced from the odds ratio, we have $M \sim \lambda M$. Therefore, $\succeq^*$ satisfies marginal monotonicity. 
\end{itemize} 
We conclude that $\succeq^*$ satisfies all the three axioms. 
However, $\succeq^*$ is not induced from the odds ratio. To see this, consider the following two matrices. 
\begin{align*}
M=\Bigl(\begin{tabular}{c|c}
$1$ & $0$ \\
\hline$0$ & $1$
\end{tabular}\Bigr), \: \: 
M'=\Bigl(\begin{tabular}{c|c}
$1$ & $0$ \\
\hline$0$ & $0$
\end{tabular}\Bigr). 
\end{align*}
Since $M\in \mathcal{M}_{\infty}$ and $M'\in \widetilde{\mathcal{M}}_{\infty}$, Condition (1) implies $M\succ^* M'$. However, according to the odds ratio, 
\begin{align*}
I_O(M)=+\infty=I_O(M').
\end{align*}
Therefore, $\succeq^*$ is not induced from the odds ratio.

One can also verify this negative result by considering the following two matrices. 
\begin{align*}
M=\Bigl(\begin{tabular}{c|c}
$0$ & $1$ \\
\hline$1$ & $0$
\end{tabular}\Bigr), \: \: 
M'=\Bigl(\begin{tabular}{c|c}
$1$ & $1$ \\
\hline$1$ & $1$
\end{tabular}\Bigr). 
\end{align*}
Since $M\in \mathcal{M}_{0}$ and $M'\in \mathcal{M}_{>0}$, Condition (3) implies $M\succ^* M'$. However, according to the odds ratio, 
\begin{align*}
I_O(M)=0<1=I_O(M').
\end{align*}
Therefore, $\succeq^*$ is not induced from the odds ratio. We conclude that Theorem \ref{thm:CDMZ1} is not correct.

\cite{chiappori2025changes} are correct in showing that any preorder satisfying the axioms in Theorem \ref{thm:CDMZ1} must coincide with the preorder induced by the odds ratio {\it on the class of positive matching matrices $\mathcal{M}_{>0}$.} However, this conclusion is not guaranteed on the remaining domain. 
To remedy this problem, we make two modifications. First, we strengthen Maximum Homogamy to eliminate the preference gap between matrices in $\mathcal{M}_{\infty}$ and $\widetilde{\mathcal{M}}_{\infty}$.  

\begin{itemize}
\item \textbf{Maximum Homogamy$^+$}. For any $M\in \mathcal{M}$ and $(a,b,c,d)\in \mathcal{M}$ such that $bc=0$, we have $(a, b, c, d) \succeq M$.
\end{itemize}

The second modification is to impose Maximum Heterogamy as in Section \ref{sect-recover} to gurantee that every matrix in $\mathcal{M}_{0}$ is evaluated as least assortative. 
\begin{itemize}
\item \textbf{Maximum Heterogamy (for preorders).} For each $M\in \mathcal{M}_{>0}$ and $(a,b,c,d)\in \mathcal{M}$ such that $b>0, c>0, a=d=0$ (i.e., $(a,b,c,d)\in \mathcal{M}_{0}$), we have $M \succeq (a,b,c,d)$. 
\end{itemize}
The following theorem recovers the axiomatization of the odds ratio. 
\begin{theorem}
\label{thm-odds-recover}
The odds ratio induces the unique preorder that satisfies marginal monotonicity, marginal independence, maximum homogamy$^+$, and maximum heterogamy. 
\end{theorem}
Note that maximum homogamy$^+$ and maximum heterogamy uniquely pin down the ordinal ranking between a positive matching matrix and non-positive matching matrix, and between nonpositive matching matrices.\footnote{Precisely, to prove $M\succ M'$ for $M\in \mathcal{M}_{>0}$ and $M'\in \mathcal{M}_{0}$, we need marginal monotonicity aside from maximum heterogamy. The proof proceeds as follows: by maximum heterogamy, we have $M\succeq M'$. If $M\sim M'$, then letting $M=(a,b,c,d)$ and $M_{\varepsilon}=(a-\varepsilon, b+\varepsilon, c+\varepsilon, d-\varepsilon)\in \mathcal{M}_{>0}$ for a sufficiently small $\varepsilon$, marginal monotonicity implies $M\succ M_{\varepsilon}$. But then, we have $M'\sim M\succ M_{\varepsilon}$, contradicting maximum heterogamy.} 
Since marginal monotonicity and marginal independence imply that the oreorder is induced from the odds ratio on the class of positive matching matrices, Theorem \ref{thm-odds-recover} holds.


\subsection{Normalized trace} 

Specifically, \cite{chiappori2025changes} consider the following index called the \textbf{normalized trace}: 
\begin{align*}
I_{t r}(M)=\left\{\begin{array}{cl}
1 & \text { if } b c=0, \\
\frac{a+d}{a+b+c+d} \in(0,1) & \text { if } a b c d \neq 0, \\
0 & \text { if } a d=0 .
\end{array}\right.
\end{align*}
We remark that this index is not well-defined on the entire domain of matchings $\mathcal{M}$;  
the conditions \lq\lq$b c=0$" (first line) and \lq\lq$a d=0$" (third line) overlap, so a matching matrix such as $(1,1,0,0)$ satisfies both clauses and is assigned the values 1 and 0 simultaneously. Therefore, we restrict the domain of an index to the set of matchings $\hat{\mathcal{M}}$ defined by 
\begin{align*}
    \hat{\mathcal{M}}=\{M\in \mathcal{M}\mid bc\neq 0 \text{ or } ad\neq 0\}. 
\end{align*}

The following is Theorem 2 of \cite{chiappori2025changes}. 

\begingroup
\renewcommand{\thetheorem}{CDMZ2} 
\begin{theorem}\label{thm:CDMZ2}
The normalized trace is the unique index, up to positive
affine transformation, that satisfies the three invariance axioms, marginal
monotonicity, maximum homogamy, and population decomposability. 
\end{theorem}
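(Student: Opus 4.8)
The plan is to establish the normalized-trace characterization (Theorem~\ref{thm:CDMZ2}) on a corrected domain by following the two-step template of Theorem~\ref{thm-1} and Theorem~\ref{thm-2}, with population decomposability and maximum homogamy playing the roles that random decomposability and maximum heterogamy played there. Since the excerpt only names these two axioms, I read them in their natural forms: population decomposability as $I(M+M')=\frac{|M|}{|M+M'|}I(M)+\frac{|M'|}{|M+M'|}I(M')$ for positive $M,M'$, and maximum homogamy as the requirement that the homogamy matrices $(a,0,0,d)$ attain the global maximum. The indispensable preliminary is to replace $\mathcal{M}$ by $\hat{\mathcal{M}}$, on which the two defining clauses of $I_{tr}$ no longer overlap and $I_{tr}$ is single-valued. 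The ``if'' direction is then routine on $\hat{\mathcal{M}}$: $I_{tr}$ is symmetric in $b,c$ and in $(a,d)$ and is scale free, giving the invariance axioms; on $\mathcal{M}_{>0}$ the ratio $\frac{a+d}{|M|}$ strictly increases in $a$ at fixed marginals, giving marginal monotonicity; homogamy matrices receive the value $1=\max$, giving maximum homogamy; and $|M|\,I_{tr}(M)=a+d$ is additive, giving population decomposability.

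For the ``only if'' direction I would first argue on the positive cone $\mathcal{M}_{>0}$, reusing the mechanism of the ``only if'' proof of Theorem~\ref{thm-1} almost verbatim. Writing $\bar{I}(M):=|M|\,I(M)$, population decomposability makes $\bar{I}$ additive and scale invariance makes it positively homogeneous of degree one, so the basis argument with $M_1,\dots,M_4$ forces $\bar{I}(M)=p\,a+q\,b+r\,c+s\,d$; side and type invariance collapse this to $\bar{I}(M)=p(a+d)+q(b+c)$. Hence $I(M)=q+(p-q)\frac{a+d}{|M|}$ on $\mathcal{M}_{>0}$, and marginal monotonicity forces $p>q$. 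On positive matrices $I$ is therefore already a positive affine transformation of $I_{tr}$, with slope $p-q>0$ and intercept $q\ge 0$, which is exactly the claim restricted to $\{abcd\neq 0\}$.

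The main obstacle is the two boundary strata of $\hat{\mathcal{M}}$, where $I_{tr}$ is \emph{discontinuous}: it jumps to its maximum on $\{bc=0,\,ad\neq 0\}$ and to its minimum on $\{ad=0,\,bc\neq 0\}$. Continuity, the tool that extended the formula in Theorem~\ref{thm-2}, is unavailable here, and in fact the interior formula $q+(p-q)\frac{a+d}{|M|}$ does \emph{not} extend continuously to these strata; population decomposability also has no bite there, since a matrix with a zero entry cannot be written as a sum of two strictly positive matrices. The entire weight therefore falls on maximum homogamy to pin the maximal value on the ``$bc=0$'' stratum, together with a symmetric extremal argument for the ``$ad=0$'' stratum. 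I expect the delicate point to be that maximum homogamy, as ordinarily stated, constrains only the pure homogamy matrices $(a,0,0,d)$, whereas the characterization demands that \emph{every} matrix with $bc=0$ and $ad\neq 0$ take the same maximal value, independently of $b$ and $c$. Bridging that gap, and verifying that it does not covertly impose a further restriction that re-enlarges the admissible class (as the spurious off-diagonal term did for CDMZ3), is where the argument is most fragile and where the precise formulations of maximum homogamy and population decomposability become essential.
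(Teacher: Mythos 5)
Your proposal sets out to prove a statement that the paper does not prove but \emph{refutes}: the whole point of the concluding section is that Theorem CDMZ2 is false as stated, and the paper exhibits an explicit counterexample, the index $I'_{tr}$ that equals $1$ on $\{bc=0\}$, equals $0$ on $\{ad=0\}$, and equals $\frac{a+d+b/2+c/2}{|M|}=\frac12+\frac12\cdot\frac{a+d}{|M|}$ on positive matrices. This index satisfies all five axioms (the verification is essentially your own ``if''-direction checklist applied to $|M|\,I'_{tr}(M)=a+d+\frac{b}{2}+\frac{c}{2}$, which is additive and respects the symmetries), yet it is not a positive affine transformation of $I_{tr}$: the two boundary strata force $\alpha+\beta=1$ and $\beta=0$, hence $\alpha=1$, which is contradicted at $M=(1,1,1,1)$ where $I_{tr}=\frac12$ and $I'_{tr}=\frac34$.

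The place where your argument breaks is exactly the sentence in which you declare victory on the interior: having derived $I(M)=q+(p-q)\frac{a+d}{|M|}$ on $\mathcal{M}_{>0}$ with only $p>q\ge 0$, you assert this ``is exactly the claim restricted to $\{abcd\neq 0\}$.'' It is not. The theorem requires a \emph{single} affine transformation $\alpha I_{tr}+\beta$ valid on all of $\hat{\mathcal{M}}$, so the interior intercept $q$ must coincide with the value of $I$ on the stratum $\{ad=0,\,bc\neq 0\}$ and $p$ with the value on $\{bc=0,\,ad\neq 0\}$. None of the listed axioms enforces this consistency: population decomposability has no bite on the boundary (as you note), maximum homogamy only bounds the homogamy stratum from below, and there is no axiom at all pinning the heterogamy stratum to the interior infimum $q$. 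The counterexample $I'_{tr}$ lives precisely in this gap, with interior intercept $q=\frac12$ but boundary value $0$. Your closing paragraph correctly senses that the boundary strata are where the argument is ``most fragile,'' but the honest conclusion is that the gap cannot be bridged: the axioms characterize a strictly larger class (interior slope $p-q>0$ and intercept $q\ge 0$ free, plus essentially unconstrained boundary behaviour subject to maximum homogamy and the invariance axioms), and recovering the normalized trace requires additional axioms, in the spirit of what the paper does for the aggregate likelihood ratio in Theorem \ref{thm-2}.
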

\endgroup
%
%
%
The last two axioms in the statement are defined 
as follows: 
\begin{itemize}
\item \textbf{Maximum Homogamy (for indices)}. For each $M\in \hat{\mathcal{M}}$ and $(a,b,c,d)\in \hat{\mathcal{M}}$ such that $ad>0$ and $bc=0$, we have $I(a,b,c,d)\geq I(M)$. 
\item \textbf{Population Decomposability}. For each pair of positive matchings $M, M'\in \hat{\mathcal{M}}$, 
\begin{align*}
I\left(M+M^{\prime}\right)=\frac{|M|}{|M|+\left|M^{\prime}\right|} I(M)+\frac{\left|M^{\prime}\right|}{|M|+\left|M^{\prime}\right|} I\left(M^{\prime}\right).
\end{align*}
\end{itemize}
To see that Theorem CDMZ2 does not hold, consider the following index defined on $\hat{\mathcal{M}}$. 
\begin{align*}
I'_{t r}(M)=\left\{\begin{array}{cl}
1 & \text { if } b c=0, \\
\frac{a+d+\frac{b}{2}+\frac{c}{2}}{a+b+c+d} \in(0,1) & \text { if } a b c d \neq 0, \\
0 & \text { if } a d=0 .
\end{array}\right.
\end{align*}
One can verify that this modified index satisfies the three invariance axioms, marginal
monotonicity, maximum homogamy, and population decomposability.\footnote{This modified index also satisfies other two monotonicity axioms in \cite{chiappori2025changes}, namely diagonal monotonicity and off-diagonal monotonicity.}
However, this modified index is not a positive affine transformation of the normalized trace. To see this, let $M=(a,b,c,d)\in \hat{\mathcal{M}}$. We consider two cases. \\
\textbf{Case 1:} Suppose $b c=0, a d>0$. 
Then, 
$$
I_{t r}(M)=1, \quad I_{t r}^{\prime}(M)=1
$$
Hence
$$
1=\alpha \cdot 1+\beta \quad \Rightarrow \quad \alpha+\beta=1 .
$$ \\
\textbf{Case 2:} Suppose $a d=0, b c>0$. 
Then
$$
I_{t r}(M)=0, \quad I_{t r}^{\prime}(M)=0
$$
Hence
$$
0=\alpha \cdot 0+\beta \quad \Rightarrow \quad \beta=0.
$$
Together with $\alpha+\beta=1$, we have $\alpha=1$. This means that
\begin{align}
I_{tr}(M)=I'_{tr}(M) \text{ for all } M\in \mathcal{M}.
\label{eq-counter-2}
\end{align} 
Let $M=(1,1,1,1)$. Then, 
\begin{align*}
I_{t r}(M)&=\frac{1}{2}, \\
I_{t r}^{\prime}(M)&=\frac{3}{4}
\end{align*}
We obtain a contradiction to (\ref{eq-counter-2}).

By adopting a technique similar to that developed in Section \ref{sect-recover}, we can recover the axiomatization of the normalized trace. We will provide details in an updated version of this note.

\section{Multi-type odds ratio}


In this section we offer a multi-type generalization of the odds ratio. 
The set of \tb{types} is given by $T=\{1, \dots, n\}$. 
Let $\mathcal{T}=T\times T$ denote the set of type profiles of men and women with generic element $\tau\in \mathcal{T}$. 
A \tb{matching} $M=(m_{\tau})$ is a matrix in $\mathbb{R}_{\geq 0}^{\mathcal{T}}\setminus\{\tb{0}\}$ with rows corresponding to men types and columns corresponding to women types.
For each type profile $\tau=(t, t') \in \mathcal{T}$, $m_{\tau}$ denotes the number of observed matched pairs of men of type $t$ and women of type $t'$.
Let $\mathcal{M}_{>0}$ denote the set of all positive matching matrices, i.e., all entries are positive. 
Throughout this section, we define an index as a function $I: \mathcal{M}_{>0} \rightarrow \mathbb{R}$.
We often write $M \geqslant M^{\prime}$ to denote $I(M) \geq I\left(M^{\prime}\right)$.


For $M=(m_{\tau}) \in \mathcal{M}_{>0}$, $\theta \in \mathcal{T}$, and $m'_{\theta} \in \mathbb{R}_{>0}$, 
let $(M_{-\theta}, m'_{\theta}) \in \mathcal{M}_{>0}$ denote the matching obtained from $M$ by 
replacing the $\theta$-th entry with $m'_{\theta}$, while keeping all other entries the same.


\begin{itemize}
\item \tb{Cell Scale Independence:} Let $M=(m_{\tau})\in \mathcal{M}_{>0}$ and $\tilde{M}=(\tilde{m}_{\tau}) \in \mathcal{M}_{>0}$. Then,  for each $\theta=(t,t') \in \mathcal{T}$ and $\lambda>0$,
\begin{align*}
I(M)>I(\tilde{M}) \Longleftrightarrow I(M_{-\theta}, \lambda m_{\theta})> I(\tilde{M}_{-\theta}, \lambda \tilde{m}_{\theta}). 
\end{align*}
\end{itemize}
This axiom was implicitly considered by \cite{osborne1976irrelevant} in the context of axiomatizing social welfare functions.\footnote{The main objective of \cite{osborne1976irrelevant} is to establish the nonexistence of a social welfare function satisfying a set of axioms, none of which corresponds to Cell Scale Independence. However, this condition appears in the Lemma of the paper (p.1007). }
It states that if, in both matchings, the count in any single cell (i.e., a given man–woman type pair) is multiplied by the same positive factor, then the index preserves the ranking of assortativeness between the two matchings. This axiom makes sense if we posit that the key determinant of assortativeness is the {\it relative} number of matched pairs (compared across types), because after the change, both matrices receive the same proportional change in a particular type pair.

Note that the odds ratio in Section \ref{sect:odds-counter} assigns the number $a^{1}\cdot b^{-1}\cdot c^{-1}\cdot d^1$ to positive matrices $(a,b,c,d)\in \mathcal{M}_{>0}$. Below we generalize this index to multi-type markets. 
For an index $I:\mathcal{M}_{>0}\rightarrow \mathbb{R}$, we define Scale Invariance in the same way as in Section \ref{sect2}. An index $I$ satisfies Continuity if $I$ is a continuous function. 
We say that $Z=(z_{\tau}) \in \mathbb{R}^{\mathcal{T}}$ is a \tb{matrix with zero total sum} if $\sum_{\tau\in \mathcal{T}} z_{\tau}=0$.

\begin{theorem}
\label{thm-odds-multi}
$I: \mathcal{M}_{>0} \rightarrow \mathbb{R}$ satisfies Scale Invariance, Continuity, and Cell Scale Independence if and only if there are increasing $\xi: \mathbb{R} \rightarrow \mathbb{R}$ and a matrix $Z=(z_{\tau}) \in \mathbb{R}^{\mathcal{T}}$ with zero total sum such that $I(M)=\xi\left(\prod_{\tau\in \mathcal{T}} m_{\tau}^{z_{\tau}}\right)$ for all $M=(m_{\tau}) \in \mathcal{M}_{>0}$. 
\end{theorem} 
\begin{proof}
\tb{Proof of the ``if'' direction:} Suppose that there are increasing $\xi: \mathbb{R} \rightarrow \mathbb{R}$ and a matrix $Z=(z_{\tau}) \in \mathbb{R}^{\mathcal{T}}$ with zero total sum such that $I$ is given as stated. Then, it is clear that $I$ satisfies Continuity. 

We show that $I$ satisfies Scale Invariance. 
Consider $M=(m_{\tau}) \in \mathcal{M}_{>0}$ and $\lambda>0$. Our goal is to show that $I(\lambda M)=I(M)$. Since both $\xi$ and $\log$ are strictly increasing functions, it suffices to show that  
\begin{align}
\log\left(\prod_{\tau\in \mathcal{T}} m_{\tau}^{z_{\tau}}\right)=\log\left(\prod_{\tau\in \mathcal{T}} (\lambda m_{\tau})^{z_{\tau}}\right). 
\label{eq:0} 
\end{align}
Rewriting the right-hand side, 
\begin{align*}
\log\left(\prod_{\tau\in \mathcal{T}} (\lambda m_{\tau})^{z_{\tau}}\right)&=\sum_{\tau\in \mathcal{T}}
z_{\tau}\log \lambda m_{\tau}=\sum_{\tau\in \mathcal{T}}
z_{\tau}\log \lambda + \sum_{\tau\in \mathcal{T}}
z_{\tau}\log m_{\tau}=\sum_{\tau\in \mathcal{T}}
z_{\tau}\log m_{\tau}, 
\end{align*}
where the last equality follows from the assumption that $Z$ is a matrix with zero total sum, i.e., 
$\sum_{\tau\in \mathcal{T}}z_{\tau}=0$. Since the right-most term is equal to the left-hand side of (\ref{eq:0}), the desired equation holds.

Next, we show that $I$ satisfies Cell Scale Independence. 
Consider $M=(m_{\tau})\in \mathcal{M}_{>0}$, $\tilde{M}=(\tilde{m}_{\tau}) \in \mathcal{M}_{>0}$, 
$\theta \in \mathcal{T}$ and $\lambda>0$. 
\begin{align*}
I(M)>I(\tilde{M}) &\Longleftrightarrow \xi\left(\prod_{\tau\in \mathcal{T}} m_{\tau}^{z_{\tau}}\right)>\xi\left(\prod_{\tau\in \mathcal{T}} \tilde{m}_{\tau}^{z_{\tau}}\right) \\
&\Longleftrightarrow \prod_{\tau\in \mathcal{T}} m_{\tau}^{z_{\tau}}>\prod_{\tau\in \mathcal{T}} \tilde{m}_{\tau}^{z_{\tau}} \\
&\Longleftrightarrow \log\left(\prod_{\tau\in \mathcal{T}} m_{\tau}^{z_{\tau}}\right)>\log\left(\prod_{\tau\in \mathcal{T}} \tilde{m}_{\tau}^{z_{\tau}}\right) \\
&\Longleftrightarrow \sum_{\tau\in \mathcal{T}}z_{\tau} \log m_{\tau}>\sum_{\tau\in \mathcal{T}}z_{\tau} \log \tilde{m}_{\tau} \\ 
&\Longleftrightarrow \sum_{\tau\in \mathcal{T}}z_{\tau} \log m_{\tau}+z_{\theta}\cdot \log \lambda>\sum_{\tau\in \mathcal{T}}z_{\tau} \log \tilde{m}_{\tau}+z_{\theta}\cdot \log \lambda \\ 
&\Longleftrightarrow \sum_{\tau\in \mathcal{T}\setminus \{\theta\}}z_{\tau} \log m_{\tau}+z_{\theta}\cdot \log (\lambda\cdot m_{\theta})>\sum_{\tau\in \mathcal{T}\setminus \{\theta\}}z_{\tau} \log \tilde{m}_{\tau}+z_{\theta}\cdot \log (\lambda\cdot \tilde{m}_{\theta}) \\ 
&\Longleftrightarrow \log \left(\prod_{\tau\in \mathcal{T}\setminus\{\theta\}} m_{\tau}^{z_{\tau}}\cdot (\lambda \cdot m_{\theta})^{z_{\theta}}\right)>\log \left(\prod_{\tau\in \mathcal{T}\setminus\{\theta\}} \tilde{m}_{\tau}^{z_{\tau}}\cdot (\lambda \cdot \tilde{m}_{\theta})^{z_{\theta}}\right) \\
&\Longleftrightarrow \prod_{\tau\in \mathcal{T}\setminus\{\theta\}} m_{\tau}^{z_{\tau}}\cdot (\lambda \cdot m_{\theta})^{z_{\theta}}>\prod_{\tau\in \mathcal{T}\setminus\{\theta\}} \tilde{m}_{\tau}^{z_{\tau}}\cdot (\lambda \cdot \tilde{m}_{\theta})^{z_{\theta}} \\
&\Longleftrightarrow \xi\left(\prod_{\tau\in \mathcal{T}\setminus\{\theta\}} m_{\tau}^{z_{\tau}}\cdot (\lambda \cdot m_{\theta})^{z_{\theta}}\right)>\xi\left(\prod_{\tau\in \mathcal{T}\setminus\{\theta\}} \tilde{m}_{\tau}^{z_{\tau}}\cdot (\lambda \cdot \tilde{m}_{\theta})^{z_{\theta}}\right) \\
&\Longleftrightarrow I(M_{-\theta}, \lambda m_{\theta})> I(\tilde{M}_{-\theta}, \lambda \tilde{m}_{\theta}),
\end{align*}
where the second and the second-to-last equivalences follow from the fact that $\xi$ is an increasing function. 

\medskip \noindent
\tb{Proof of the ``only if'' direction:} We introduce some preliminaries. 
%
Let $\theta\in \mathcal{T}$. 
We say that $I$ is \tb{nondecreasing in $\theta$} if  
\begin{align*}
I(M)\leq I(M_{-\theta}, m'_{\theta}) \text{ for all } M=(m_\tau) \in \mathcal{M}_{>0} \text{ and } m'_{\theta}\in \mathbb{R}_{>0} \text{ with } m'_{\theta} \geq m_{\theta}.  
\end{align*} 
We say that $I$ is \tb{nonincreasing in $\theta$} if  
\begin{align*}
I(M)\geq I(M_{-\theta}, m'_{\theta}) \text{ for all } M=(m_{\tau}) \in \mathcal{M}_{>0} \text{ and } m'_{\theta}\in \mathbb{R}_{>0} \text{ with } m'_{\theta}\geq m_{\theta}.  
\end{align*}
Let $\1$ denote the matching in which all the entries are equal to 1. 

\begin{claim} \label{claim1} 
Let $I$ be an index that satisfies Continuity and Cell Scale Independence. Then, for each $\theta \in \mathcal{T}$, $I$ is nondecreasing or nonincreasing in $\theta$ (possibly both). 
\end{claim} 
\begin{proof}
Fix $\theta \in \mathcal{T}$. 
Suppose, for contradiction, that $I$ is neither nondecreasing nor nonincreasing. Then, there exist $M=(m_{\tau}) \in \mathcal{M}_{>0}$, $m'_{\theta}>m_{\theta}$, and $\tilde{M}=(\tilde{m}_{\tau}) \in \mathcal{M}_{>0}$, $\tilde{m}'_{\theta}>\tilde{m}_{\theta}$ s.t. 
\begin{align}
I(M)&>I(M_{-\theta}, m'_{\theta}), \label{eq:1} \\
I(\tilde{M})&<I(\tilde{M}_{-\theta}, \tilde{m}'_{\theta}). \label{eq:2}  
\end{align}

Consider matchings $M$ and $(M_{-\theta}, m'_{\theta})$. For any $\tau \in \mathcal{T}$ with $\tau \neq \theta$, multiplying the $\tau$-th entry of $M$ and of $(M_{-\theta}, m'_{\theta})$ by $1 / m_{\tau}$ makes the value of that entry in both matrices equal to $1$.
By Cell Scale Independence, (\ref{eq:1}) is equivalent to 
\begin{align}
I(\1_{-\theta}, m_{\theta})>I(\1_{-\theta}, m'_{\theta}). 
\label{eq:3} 
\end{align}
Similarly, (\ref{eq:2}) is equivalent to 
\begin{align}
I(\1_{-\theta}, \tilde{m}_{\theta})<I(\1_{-\theta}, \tilde{m}'_{\theta}). 
\label{eq:4} 
\end{align} 
We define $f:\mathbb{R}_{>0}\rightarrow \mathbb{R}$ by
\begin{align*}
f(x_{\theta})=I(\1_{-\theta}, x_{\theta}) \text{ for all } x_{\theta} \in \mathbb{R}_{>0}. 
\end{align*}
Then, (\ref{eq:3}) and (\ref{eq:4}) are rewritten as 
\begin{align}
&f(m_{\theta})>f(m'_{\theta}), \label{eq:5} \\
&f(\tilde{m}_{\theta})<f(\tilde{m}'_{\theta}). \label{eq:6} 
\end{align}
We define
\begin{align*}
m^*_{\theta}:=\sup\{x_{\theta}\in [m_{\theta}, m'_{\theta}] \mid  f(x_{\theta})\geq f(m_{\theta})\}. 
\end{align*}
Since $m^*_{\theta}$ is defined by the supremum, there exists a sequence $\{x^k_{\theta}\}_{k=1}^\infty\subseteq [m_{\theta}, m'_{\theta}]$ such that $f(x^k_{\theta})\geq f(m_{\theta})$ for all $k=1, 2, \dots$, and $x^k_{\theta} \rightarrow m^*_{\theta}$ as $k\rightarrow \infty$. By Continuity, 
\begin{align*}
\lim_{k\rightarrow \infty} f(x_{\theta}^k)=f(m^*_{\theta})\geq f(m_{\theta}). 
\end{align*} 
This ineuqality together with (\ref{eq:5}) implies $m^*_{\theta}<m'_{\theta}$. Since $m^*_{\theta}$ is the supremum of points with a function value no less than $f(m_{\theta})$, any point $x_{\theta}>m^*_{\theta}$ attains a higher functioin value than $f(m_{\theta})$. Therefore, we have 
\begin{align}
f(x_{\theta})<f(m_{\theta})\leq f(m^*_{\theta}) \text{ for all } x_{\theta} \in (m^*_{\theta}, m'_{\theta}] \label{eq:7}  
\end{align}

So far we have considered matchings $M$ and $(M_{-\theta}, m'_{\theta})$.
Analogously we consider matchings $\tilde{M}$ and $(\tilde{M}_{-\theta}, \tilde{m}'_{\theta})$.
We define
\begin{align*}
\tilde{m}^*_{\theta}:=\sup\{x_{\theta} \in [\tilde{m}_{\theta}, \tilde{m}'_{\theta}] \mid f(x_{\theta})\leq f(\tilde{m}_{\theta})\}. 
\end{align*}
By following the same argument used in the previous paragraph (while using (\ref{eq:6}) instead of (\ref{eq:5})), 
we obtain $\tilde{m}^*_{\theta} < \tilde{m}'_{\theta}$ and 
\begin{align}
f(x_{\theta})>f(\tilde{m}_{\theta})\geq f(\tilde{m}^*_{\theta}) \text{ for all } x_{\theta}\in (\tilde{m}^*_{\theta}, \tilde{m}'_{\theta}]. 
\label{eq:8} 
\end{align}
We define
\begin{align*}
\alpha:=\min\Bigl\{\frac{m'_{\theta}}{m^*_{\theta}}, \frac{\tilde{m}'_{\theta}}{\tilde{m}^*_{\theta}}\Bigr\}. 
\end{align*}
Since $m^*_{\theta}<m'_{\theta}$ and $\tilde{m}^*_{\theta}<\tilde{m}'_{\theta}$, we have $\alpha>1$. 
The following inequalities hold: 
\begin{align*}
m^*_{\theta}<\alpha \cdot m^*_{\theta}\leq \frac{m'_{\theta}}{m^*_{\theta}} \cdot m^*_{\theta}=m'_{\theta}, 
\end{align*}
where the first inequality follows from $m^*_{\theta}>0$ and $\alpha>1$, and the second inequality follows from the definition of $\alpha$. Similarly, 
\begin{align*}
\tilde{m}^*_{\theta}<\alpha \cdot \tilde{m}^*_{\theta}\leq \frac{\tilde{m}'_{\theta}}{\tilde{m}^*_{\theta}} \cdot \tilde{m}^*_{\theta}=\tilde{m}'_{\theta}, 
\end{align*}
By (\ref{eq:7}) for $x_\theta \leftarrow \alpha \cdot m^*_{\theta}$ and (\ref{eq:8}) for  $x_\theta \leftarrow \alpha \cdot \tilde{m}^*_{\theta}$, we have 
\begin{align*}
&f(m^*_{\theta})>f(\alpha \cdot m^*_{\theta}), \\
&f(\tilde{m}^*_{\theta})<f(\alpha \cdot \tilde{m}^*_{\theta}). 
\end{align*}
By the definition of $f$, 
\begin{align*}
&I(\1_{-\theta}, m^*_{\theta})>I(\1_{-\theta}, \alpha \cdot m^*_{\theta}),  \\
&I(\1_{-\theta}, \tilde{m}^*_{\theta})<I(\1_{-\theta}, \alpha \cdot \tilde{m}^*_{\theta}). 
\end{align*}
The former inequality means that the matching $(\1_{-\theta}, m^*_{\theta})$ is more assortative than the matching $(\1_{-\theta}, \alpha \cdot m^*_{\theta})$. Multiplying the $\theta$-th entry of both matrices by $\frac{\tilde{m}^*_{\theta}}{m^*_{\theta}}$, we obtain two matchings that appear in the latter inequality, where assortativeness is reversed. We obtain a contradiction to Cell Scale Independence. 
\end{proof}
%

We define 
\begin{align*}
&\mathcal{T}_{-}=\{\theta\in \mathcal{T}\mid \text{$I$ is not nondecreasing in $\theta$}\}. 
\end{align*}
For $M=(m_{\tau}) \in \mathcal{M}_{>0}$, we define $M^{\text{inv}(\mathcal{T}_{-})}$ as follows: for each $\tau\in \mathcal{T}$, 
\begin{align*}
(M^{\text{inv}(\mathcal{T}_{-})})_{\tau}=\begin{cases}1/m_{\tau} &\text{ if } \tau\in \mathcal{T}_{-}, \\
                                                     m_{\tau} & \text{ otherwise. }
                                  \end{cases}
\end{align*}
We define a new index $I^{\text{inv}(\mathcal{T}_{<0})}$ as follows: 
\begin{align*}
I^{\text{inv}(\mathcal{T}_{-})}(M)=I(M^{\text{inv}(\mathcal{T}_{-})}) \text{ for all } M\in \mathcal{M}_{>0}. 
\end{align*}

\begin{claim}\label{claim2}
For each $\theta\in \mathcal{T}$, $I^{\text{\normalfont inv}(\mathcal{T}_{-})}$ is nondecreasing in $\theta$. 
\end{claim}
\begin{proof}
Consider $\theta\in \mathcal{T}$. If $I$ is nondecreasing in $\theta$, then it is clear that $I^{\text{\normalfont inv}(\mathcal{T}_{-})}$ is also nondecreasing in $\theta$. Suppose that $I$ is not nondecreasing in $\theta$, i.e., $\theta\in \mathcal{T}_{-}$. By Claim \ref{claim1}, $I$ is nonincreasing in $\theta$. Then, for any $M=(m_\tau) \in \mathcal{M}_{>0}$ and  $m'_{\theta}\in \mathbb{R}_{>0}$ with  $m'_{\theta} \geq m_{\theta}$,

\begin{align*}
I^{\text{inv}(\mathcal{T}_{-})}(M)&=I(M^{\text{inv}(\mathcal{T}_{-})}) \\
&=I\Bigl(M^{\text{inv}(\mathcal{T}_{-})}_{-\theta}, \frac{1}{m_{\theta}}\Bigr) \\
&\leq I\Bigl(M^{\text{inv}(\mathcal{T}_{-})}_{-\theta}, \frac{1}{m'_{\theta}}\Bigr) \\
&=I\Bigl((M_{-\theta}, m'_{\theta})^{\text{inv}(\mathcal{T}_{-1})}\Bigr)  \\
&=I^{\text{inv}(\mathcal{T}_{-})}(M_{-\theta}, m'_{\theta}), 
\end{align*} 
where the second and the second-to-last equalities follow from $\theta\in \mathcal{T}_{-}$ and the inequality follows from the fact that $I$ is nonincreasing in $\theta$. 
\end{proof} 

\begin{claim}
$I^{\text{\normalfont inv}(\mathcal{T}_{-})}$ satisfies Cell Scale Independence. 
\end{claim}
\begin{proof}
Consider $M=(m_{\tau})\in \mathcal{M}_{>0}$ and $\tilde{M}=(\tilde{m}_{\tau}) \in \mathcal{M}_{>0}$, $\theta \in \mathcal{T}$, and $\lambda>0$. If $\theta\notin \mathcal{T}_{-}$, then the claim immediately follows from Cell Scale Independence of $I$. Suppose that $\theta \in \mathcal{T}_{-}$. 
\begin{align*}
I^{\text{\normalfont inv}(\mathcal{T}_{-})}(M)>I^{\text{\normalfont inv}(\mathcal{T}_{-})}(\tilde{M}) &\Longleftrightarrow  I\left(M^{\text{\normalfont inv}(\mathcal{T}_{-})}_{-\theta}, \frac{1}{m_{\theta}}\right)>I\left(\tilde{M}^{\text{\normalfont inv}(\mathcal{T}_{-})}_{-\theta}, \frac{1}{\tilde{m}_{\theta}}\right) \\
&\Longleftrightarrow  I\left(M^{\text{\normalfont inv}(\mathcal{T}_{-})}_{-\theta}, \frac{1}{\lambda} \cdot \frac{1}{m_{\theta}}\right)>I\left(\tilde{M}^{\text{\normalfont inv}(\mathcal{T}_{-})}_{-\theta}, \frac{1}{\lambda} \cdot \frac{1}{\tilde{m}_{\theta}}\right) \\
&\Longleftrightarrow I^{\text{\normalfont inv}(\mathcal{T}_{-})}\left(M_{-\theta}, \lambda m_{\theta} \right)> I^{\text{\normalfont inv}(\mathcal{T}_{-})}(\tilde{M}_{-\theta}, \lambda \tilde{m}_{\theta}), 
\end{align*}
where the second equivalence follows from Scale Cell Independence of $I$. 
\end{proof}

We make use of the following characterization of nondecreasing and cell-scale independent functions.
\begin{proposition}[Lemma of \cite{osborne1976irrelevant}] \label{prop1} 
An index $I$ is nondecreasing in each $\theta\in \mathcal{T}$ and satisfies Cell Scale Independence if and only if there are increasing $\xi: \mathbb{R} \rightarrow \mathbb{R}$ and a matrix $Z=(z_{\tau}) \in \mathbb{R}^{\mathcal{T}}_{\geq 0}$ such that $I(M)=\xi\left(\prod_{\tau\in \mathcal{T}} m_{\tau}^{z_{\tau}}\right)$ for all $M=(m_{\tau}) \in \mathcal{M}_{>0}$.
\end{proposition}

Combining Claims \ref{claim1} and \ref{claim2} with Proposition \ref{prop1}, there are increasing $\xi: \mathbb{R} \rightarrow \mathbb{R}$ and a matrix $Z=(z_{\tau}) \in \mathbb{R}^{\mathcal{T}}_{\geq 0}$ such that $I^{\text{\normalfont inv}(\mathcal{T}_{-})}(M)=\xi\left(\prod_{\tau\in \mathcal{T}} m_{\tau}^{z_{\tau}}\right)$ for all $M=(m_{\tau}) \in \mathcal{M}_{>0}$. Then, for each $M=(m_{\tau})\in \mathcal{M}_{>0}$, 
\begin{align*}
I(M)&=I^{\text{\normalfont inv}(\mathcal{T}_{-})}(M^{\text{\normalfont inv}(\mathcal{T}_{-})}) \\
&=\xi\left(\prod_{\tau\in \mathcal{T}\setminus \mathcal{T}_{-}} m_{\tau}^{z_{\tau}}\prod_{\theta\in \mathcal{T}_{-}} \left(\frac{1}{m_{\theta}}\right)^{z_{\theta}}\right) \\
&=\xi\left(\prod_{\tau\in \mathcal{T}\setminus \mathcal{T}_{-}} m_{\tau}^{z_{\tau}}\prod_{\theta\in \mathcal{T}_{-}} m_{\theta}^{-z_{\theta}}\right)
\end{align*}
Therefore, the desired condition holds for the matrix $Z^{\text{\normalfont inv}(\mathcal{T}_{-})}$. 

Finally, we show that $Z^{\text{\normalfont inv}(\mathcal{T}_{-})}$ has zero total sum. Let $\bar{z}$ denote the sum of the entries of this matrix. For matrices $\1$ and $2\cdot \1$, 
\begin{align*}
I(\1)=\xi(1), \: I(2\cdot \1)=\xi(2^{\bar{z}}). 
\end{align*}
By Scale Invariance, $I(\1)=I(2\cdot \1)$. Since $\xi(\cdot)$ is an increasing function, we have $1=2^{\bar{z}}$. It follows that the sum $\bar{z}$ must be equal to $0$, as desired. 
\end{proof}

One can consider the following weakening of Cell Scale Independence. 
\begin{itemize}
\item \tb{Cell Scale Independence$^-$:} Let $M=(m_{\tau})\in \mathcal{M}_{>0}$ and $\tilde{M}=(\tilde{m}_{\tau}) \in \mathcal{M}_{>0}$. Then,  for each $\theta=(t,t') \in \mathcal{T}$ with $t\neq t'$ and $\lambda>0$,
\begin{align*}
I(M)>I(\tilde{M}) \Longleftrightarrow I(M_{-\theta}, \lambda m_{\theta})> I(\tilde{M}_{-\theta}, \lambda \tilde{m}_{\theta}). 
\end{align*}
\end{itemize}
This axiom states that, if the ratio of the number of different-type pairs to that of same-type pairs is scaled by the same factor in $M$ and $\tilde{M}$, then the relative assortativeness between $M$ and $\tilde{M}$ should remain unchanged. 

This axiom, together with a multi-type version of Marginal Independence (see Section \ref{sect:odds-counter}), yields an axiomatization of the same form as in Theorem \ref{thm-odds-multi}, although the class of matrices $Z$ is more restricted. The class of matrices $Z$ can also be restricted by introducing multi-type versions of Type Invariance and Marginal Monotonicity (see Section \ref{sect2}). We discuss these issues in an updated version of this note.

\bibliographystyle{apalike}
\bibliography{ibj_project}

\end{document}